\documentclass{article}
\usepackage{amssymb,amsmath,amsthm,mathtools}

\usepackage{fullpage}
\usepackage{graphicx}
\usepackage{subfig}
\usepackage{soul}

\usepackage{tikz,pgfplots}
\pgfplotsset{compat=1.16}

\newtheorem{theorem}{Theorem}
\newtheorem{lemma}[theorem]{Lemma}
\newtheorem{prob}[theorem]{Problem}
\newtheorem{dfn}[theorem]{Definition}

\def\deg{{\rm deg}}
\def\ie{i.e.\ }
\def\etal{{\it et~al.}\,}

\begin{document}

\title{Constructing Order Type Graphs using an Axiomatic Approach}
\author{Sergey Bereg\thanks{Department of Computer Science,
        University of Texas at Dallas.}
\and
Mohammadreza Haghpanah$^*$}
%

\date{}

\maketitle              
\begin{abstract}
A given order type in the plane can be represented by a point set. 
However, it might be difficult to recognize the orientations of some point triples. 
Recently, Aichholzer \etal \cite{abh19} introduced exit graphs for visualizing order types in the plane.
We present a new class of geometric graphs, called {\em OT-graphs}, using abstract order types and their axioms described in the well-known book by Knuth \cite{k92}.  
Each OT-graph corresponds to a unique order type.
We develop efficient algorithms for recognizing OT-graphs and computing a minimal OT-graph for a given order type in the plane. 
We provide experimental results on all order types of up to nine points in the plane including a comparative analysis of exit graphs and OT-graphs.
\end{abstract}
\section{Introduction}
\label{intro}

The {\em orientation} of three noncollinear points in the plane is either clockwise CW or counterclockwise CCW. 
In this paper we assume that point sets are in general position the plane.  
Two finite point sets in the plane have the {\em same order type} if there is a bijection between them preserving orientation of any three distinct points.
The equivalence classes defined by this equivalence relation are the {\em order types} \cite{gp83}.

Recently, Aichholzer \etal \cite{abh19} asked ``... suppose we have discovered an interesting order type, and we would
like to illustrate it in a publication." 
This is exactly the problem that we were facing in our recent paper \cite{bh20} 
where we found that the order type 1874 for 9 points from the database \cite{aak02} 
provides a (tight) lower bound for Tverberg partitions with tolerance 2, see Fig.~\ref{1874}(a). 
Any order type in the plane can be represented by a corresponding point set (or explicit coordinates of the points). 
However, it might be difficult to recognize the orientations of some point triples. 
Aichholzer \etal \cite{abh19} introduced exit graphs for visualizing order types in the plane.
Let $S$ be a set $n$ points in the plane and let $a,b,c\in S$. 
Then $(a,b)$ is an {\em exit edge} with {\em witness} $c$ if there is no $p\in S$ such that 
line $ap$ separates $b$ from $c$ or line $bp$ separates $a$ from $c$, see Fig.~\ref{exit}(a).
Geometrically, it means that an {\em hourglass} defined by $a,b,c$ is empty.
The set of exit edges form the {\em exit graph} of $S$.
To verify that $(a,b)$ is an exit edge with witness $c$, one can check that every point $p\in S\setminus\{a,b,c\}$
is in $A\cup B$, see Fig.~\ref{exit}(a). 
Note that $A\cap S$ and $(B\cap S) \cup\{c\}$ is a partition of $S\setminus\{a,b\}$ by line $ab$. 

\begin{figure}[htb]
\begin{center}
\includegraphics[]{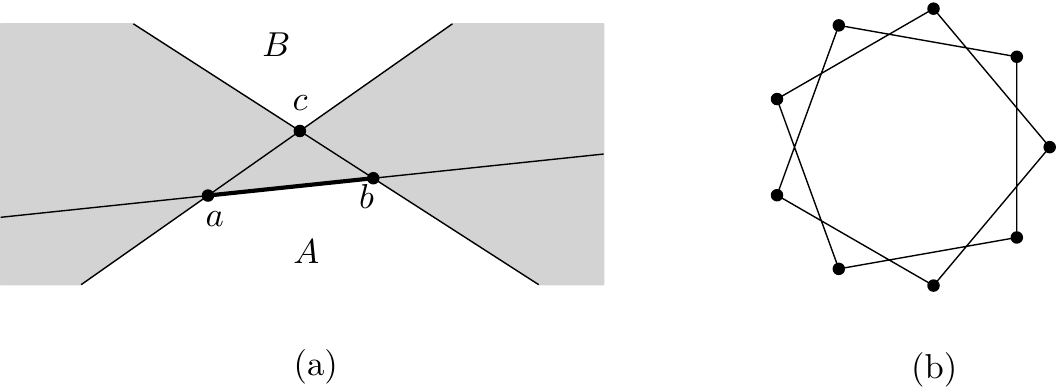}
\end{center}
\caption{(a) Exit edge $(a,b)$ with witness $c$. The hourglass-shaped region (shown in gray) is empty of points.
(b) An exit graph for 9 points in convex position.}
\label{exit}
\end{figure}

We define an {\em OT-graph} on $S$ using two ingredients. 
First, every edge $(a,b)$ in an OT-graph is equipped with the partition of $S$ by line $ab$, i.e.
$S\setminus \{a,b\}=S_{ab}^+ \cup S_{ab}^-$ where $S_{ab}^+$ ($S_{ab}^-$) contains points 
$c\in S$ such that $a,b,c$ has counterclockwise (clockwise) orientation. 
Second, we assume that an OT-graph contains a sufficient number of edges to decide 
the order type of points using axioms described in the well-known book by Knuth \cite{k92}. 
It is easy to visualize the partitions of $S$ for the edges of an OT-graph by drawing lines through them.
This may result in a dense drawing, so we omit lines in the drawing if their partitions can be easily seen.
For example, the OT-graph for the order type 1874 for 9 points from the database \cite{aak02} shown
in Fig.~\ref{1874}(b) has ten edges and only two lines are sufficient.
The property of this graph (since it is an OT-graph) is that the orientation of any triple $abc$
can be decided either (a) directly from the graph if there is an edge with both endpoints in $\{a,b,c\}$,
or (b) algebraically using five axioms \cite{k92}. 

\begin{figure}[htb]
\centering
\subfloat[]{{
\begin{tikzpicture}[scale=0.8]
 \begin{axis}[
  title={Order type 1874},
  xtick=\empty, 
  ytick=\empty, 
  scatter/classes={ 
   a={mark=square*,blue}, 
   b={mark=triangle*,red}, 
   c={mark=*,blue}
  }
 ]
  \addplot[scatter,only marks,
   scatter src=explicit symbolic]
   table[meta=label] { 
   x y label 
    9840 6320  c
    11088 53091  c
    13184 55184  c
    20272 31792  c
    23936 42832  c
    29536 27264  c
    30240 59216  c
    36608 40224  c
    65392 58624  c
   };
  \draw[red, thick] (axis cs: 9840, 6320) -- (axis cs: 11088, 53091);
  \draw[red, thick] (axis cs: 11088, 53091) -- (axis cs: 13184, 55184);
  \draw[red, thick] (axis cs: 13184, 55184) -- (axis cs: 30240, 59216);
  \draw[red, thick] (axis cs: 30240, 59216) -- (axis cs: 65392, 58624);
  \draw[red, thick] (axis cs: 65392, 58624) -- (axis cs: 9840, 6320);
 \end{axis}
\end{tikzpicture}}}\\%
\subfloat[]{{\includegraphics[scale=0.58]{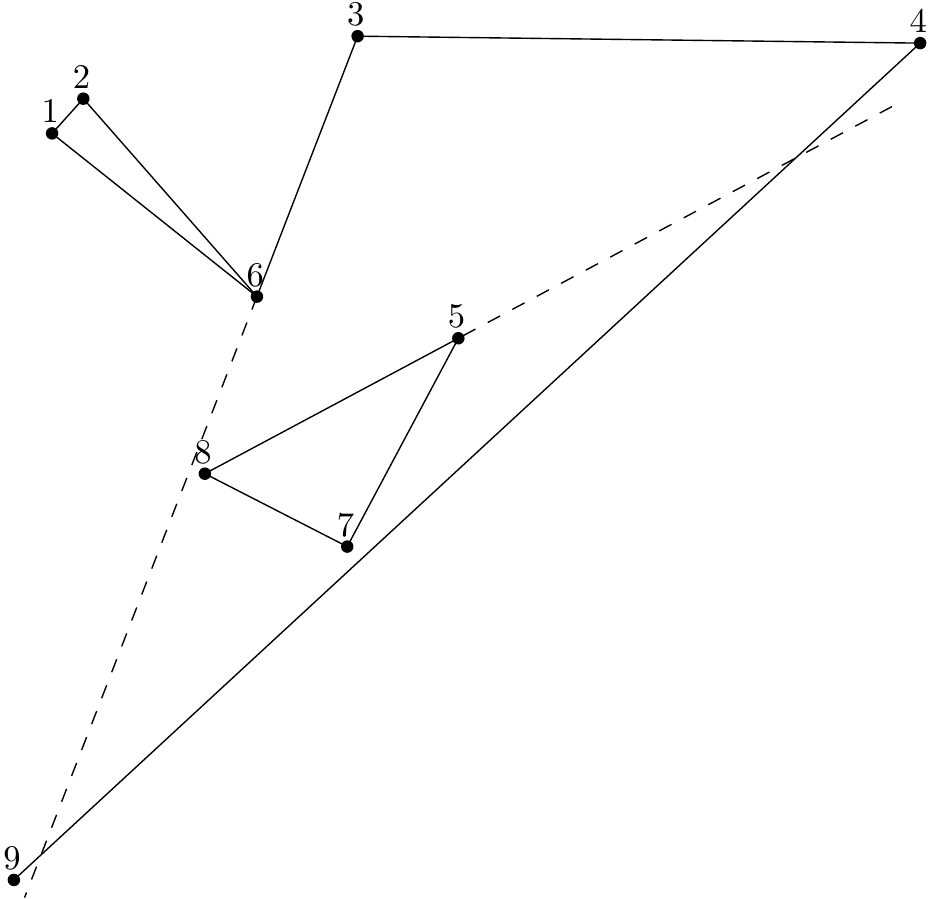}}}
\quad
\subfloat[]{{\includegraphics[scale=0.58]{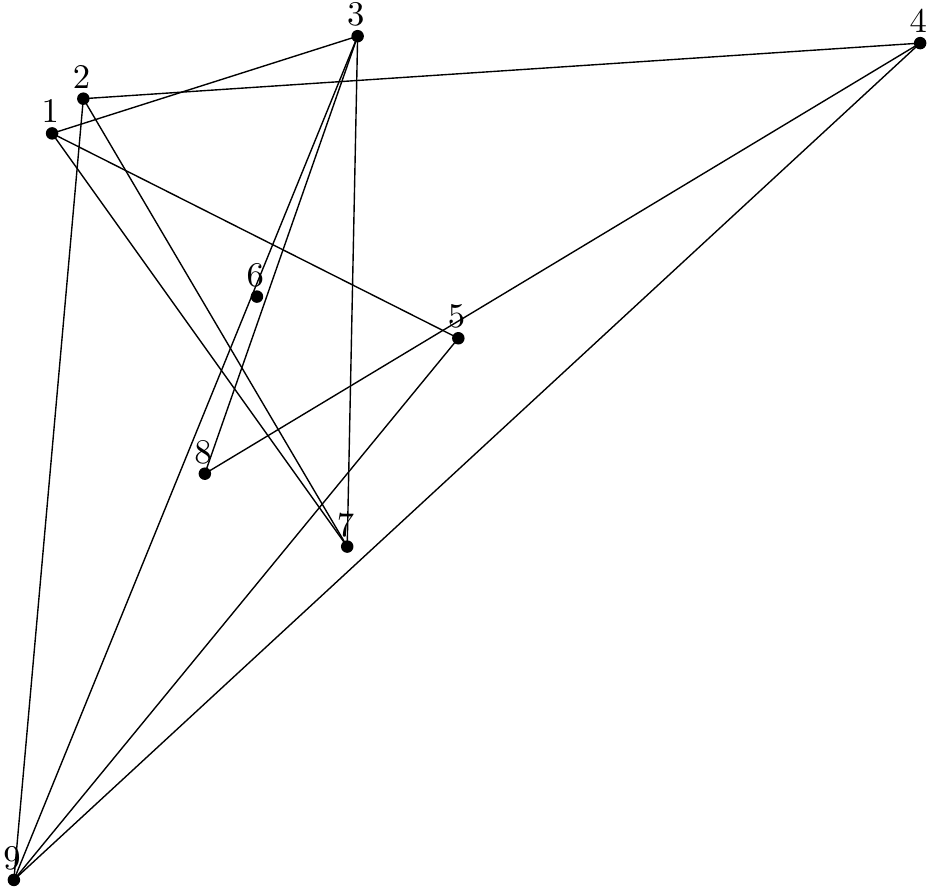}}}
\caption{(a) The order type 1874 for 9 points from the database \cite{aak02}.  
(b) An OT-graph with 9 edges for the order type 1874 (several OT-graphs with 9 edges were computed by an extensive search).  
(c) The exit graph for the order type 1874.}
\label{1874}
\end{figure}

{\em Comparison of OT-graphs and exit graphs}.
Both exit graphs and OT-graphs can be used for visualizing order types of points. It is not sufficient for verifying an order type to just draw such graphs. For exit graphs, one needs to see the witness and the hourglass for every exit edge. For OT-graphs, one needs to see only the lines extending the edges. 
The hourglasses for exit graphs and the lines for OT-graphs are needed only when some triples of points are almost collinear.

Exit graphs and OT-graphs are also different in the following sense.
For a given order type (as a point set), the exit graph is unique but OT-graphs are not since OT-graphs are defined using combinatorial axioms of Knuth \cite{k92}.  
Therefore we have an optimization problem of computing a minimum-size OT-graph for a given order type.
We believe that this optimization problem is NP-hard.
For example, we believe that the OT-graph shown in Fig.~\ref{1874}(b) has the least number of edges (9) for order type 1874 but we do not have a proof for it. 
Note that the OT-graph has 9 edges but the edge graph has 12, see Fig.~\ref{1874}(c) . 

{\em Identification of Order Types}. 
Aichholzer \etal \cite{abh19} suggested requirements for 
a graph representing an order type: "... we want to reduce the number of edges in the drawing as much as possible, but so
that the order type remains uniquely identifiable." 
OT-graphs (including the set of edges and the corresponding partitions) characterize order types, i.e.
each OT-graph corresponds to only one order type.
Unfortunately, it does not hold for the exit graphs. 
As an example, Aichholzer \etal \cite{abh19} constructed 
two sets each of 14 points\footnote{Using a pseudoline arrangement from \cite{fw00}.} such that the exit edges are the same but 
the order types are different.
With respect to minimizing the number of edges, we provide a comparative analysis of exit graphs and OT-graphs of all order types of up to 9 points in Section~\ref{s:experiments}.
Except few cases, OT-graphs have smaller number of edges.
For example, Figure \ref{fig:1268} shows order type 1268 of 9 points
where the exit graph has 15 edges but the OT-graph has only 8 edges.
Furthermore, the OT-graph shown in Fig.~\ref{fig:1268}(b) has non-crossing edges. 

\begin{figure}[h]
\centering
\includegraphics[scale=1.1]{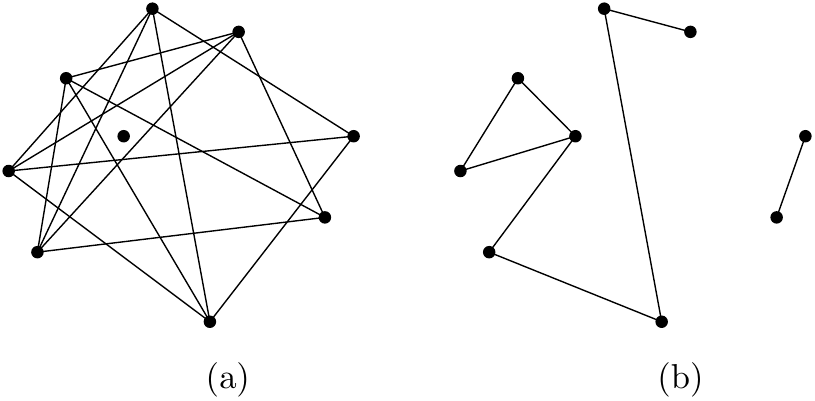}
\caption{The order type 1268 of 9 points represented as (a) the exit graph and (b) the OT-graph.}
\label{fig:1268}
\end{figure}

An interesting question is to find the smallest OT-graphs for points in convex position in the plane.
Let $c_n$ be the minimum number of edges in an OT-graph for $n$ points in convex position.

\begin{theorem} \label{thm:conv}
For any $n\ge 4$, 
$c_{n} \le \lfloor 2n/3\rfloor$. 
\end{theorem}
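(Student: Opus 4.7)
My plan is to prove the bound constructively. Label the points in convex position $p_0, p_1, \ldots, p_{n-1}$ in counterclockwise order around the hull, and write $n = 3k + r$ with $r \in \{0,1,2\}$.

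The core construction partitions the indices into consecutive blocks of three. In each full block $(p_{3i}, p_{3i+1}, p_{3i+2})$ I include the two hull edges $(p_{3i}, p_{3i+1})$ and $(p_{3i+1}, p_{3i+2})$, making the middle vertex $p_{3i+1}$ an \emph{apex} of degree two. For $r = 0$ this gives exactly $2k = \lfloor 2n/3 \rfloor$ edges and covers every vertex. For $r \in \{1,2\}$ I would adjust the terminal block (e.g.\ by substituting a short diagonal or by appending one additional hull edge) so that every vertex remains incident to some edge and the total count stays $\lfloor 2n/3 \rfloor$; the key feature retained is enough apex vertices to drive the deductions.

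I would then verify that the resulting graph is an OT-graph in two steps. First, every triple containing one of the chosen edges has its orientation read off directly from that edge's partition; in particular the construction already covers every consecutive triple $\{p_i, p_{i+1}, p_{i+2}\}$. Second, for every remaining triple I would apply Knuth's axiom~5 (transitivity) with an apex vertex $p_{3i+1}$ as the pivot $t$: its two incident hull edges certify, via their partitions, that the other points lie on a common side of a reference line through $p_{3i+1}$, which is exactly the ``same-side'' precondition that axiom~5 needs in order to chain orientations around $t$. Knuth's axiom~4 (interiority) never contributes, since in convex position no point lies inside a triangle formed by three others.

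The main obstacle is the completeness argument: one must show that iterated applications of axiom~5 really do derive \emph{every} triple orientation from the directly-given ones. I would proceed by induction on the \emph{cyclic span} of a triple (the length of the shortest hull-arc containing its three vertices). Short-span triples are deduced by a single pivoted application of axiom~5 at a nearby apex; longer-span triples are then bootstrapped from previously-established shorter ones by further axiom~5 chains. Careful bookkeeping is needed to keep the dependency graph of deductions acyclic, and the residue cases $r \in \{1,2\}$ require a separate check that the modified terminal block still provides enough apex pivots to close every required derivation.
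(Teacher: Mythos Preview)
Your construction is essentially the one the paper uses: partition the cyclic sequence into blocks of three and keep the two hull edges inside each block, with ad-hoc adjustments for $r\in\{1,2\}$. Your observation that Axiom~4 is never needed is also correct.

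Where your verification strategy departs from the paper's is in its organisation. The paper first proves a separate lemma: the \emph{full} convex-hull cycle (all $n$ hull edges) is already an OT-graph, shown by a short Axiom~5 induction on $\min\{b-a,c-b,a-c+n\}$ for a triple $p_ap_bp_c$. With that lemma in hand, for the sparser $\lfloor 2n/3\rfloor$-edge graph it suffices to derive only the ``hull-edge'' triples $p_ip_{i+1}p_j$ for every $i,j$; once those are known, the lemma manufactures all remaining triples. This two-layer reduction replaces your global cyclic-span induction by a local case analysis: for each missing hull edge $(p_i,p_{i+1})$ there are only a couple of configurations, each dispatched by a \emph{single} application of Axiom~5 or of its dual (Axiom~5${}'$: $stp\land stq\land str\land tpq\land tqr\Rightarrow tpr$), with $s$ taken to be an adjacent apex. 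The residue cases $r=1,2$ then need one short extra induction on path length along the chain of edges.

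Your direct span-induction plan should also go through, but two points deserve care. First, the pivot $t$ in the paper is \emph{not} always an apex; sometimes $t$ is the vertex of the missing hull edge and the adjacent apex plays the role of $s$, which is what makes the three ``same side of $ts$'' premises available. Second, when the missing hull edge sits on the other side of the block you will naturally want the dual Axiom~5${}'$ rather than Axiom~5 itself; this is derivable from Axioms~1,2,3,5, but it is worth invoking explicitly so the single-step case analysis stays clean.
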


It is interesting to find exact values of sequence $c_n$.
We experimented with our randomized algorithm from Section~\ref{s:algorithms} and 
conjecture that the bound in Theorem~\ref{thm:conv} is tight for all $n$ up to 20.
It is also interesting that the exit graph for $n$ points in convex position has $n$ edges, see Fig.~\ref{exit}(b) for an example.

{\em Lower bound}. 
Another interesting question is to find the smallest OT-graph for an order type of $n$ points in the plane.
Based on our experiments, it is achieved for points in convex position if $n$ is up to 9.
Is it true for any $n$?
One can argue that $\lceil n/4\rceil$ is a lower bound for the number of edges in any OT-graph 
for $n$ points in convex position. 
It is based on the fact that two consecutive points in the clockwise order along the boundary cannot be 
both isolated in an OT-graph. 

{\em Upper bound}. 
The number of edges in any OT-graph is at most $\binom{n}{2}$.
We prove an upper bound in Section~\ref{s:123} which is smaller than $n^2/4$.
The proof uses the idea of restricting the axioms in OT-graphs.
Specifically, we prove the bound by using only Axioms 1, 2, and 3. 
Surprisingly, in this case, the smallest OT-graphs for any order type of $n$ points 
have the same number of edges depending on $n$ only. 

{\em Algorithms}. 
For any set $T$ of triples with orientations, one can define its {\em CC-closure} 
$Cl(T)$ as the set of all triples that can be derived using Axioms 1-5. 
It is straightforward to make an algorithm for testing in $O(n^5)$ time whether a set of triples $T$ is the closure of itself, i.e. $Cl(T)=T$.
This can be modified to an algorithm for computing the CC-closure for an OT-graph 
(i.e. the set of triples defined by $G$). 
The algorithm repeats the following step.
If new triples are found in the testing algorithm, they are added to the set of triples.
This algorithm has $O(n^8)$ running time.
We show that it can be improved to $O(n^5)$ time.
In Section~\ref{s:experiments} we provide experimental results using our algorithms on all order types of up to nine points in the plane. We also discuss a comparative analysis of exit graphs and OT-graphs using the size of the graphs.
For example, the smallest OT-graphs for all order types for $n=4$ and $n=5$ are shown in Fig.~\ref{fig:n=4}.

\begin{figure}[h]
\centering
\includegraphics{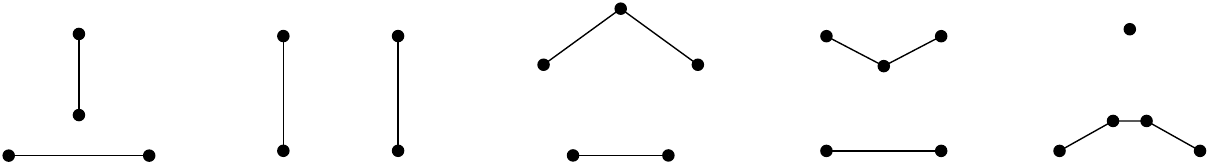}
\caption{Order types for $n=4$ and $n=5$.}
\label{fig:n=4}
\end{figure}


{\em Related Work}.
Order types are studied extensively, see for example the surveys \cite{fg-18,richter1997}. 
Aichholzer \etal \cite{ack16} studied representation of order types using radial orderings.
Cabello \cite{c-pe-06} proved that the problem of deciding whether there is 
a planar straight-line embedding of a graph on a given set of points is NP-complete. 
Goaoc \etal \cite{Goaoc15} explored the application of the theory of limits of dense graphs to order types.
The order types of random point sets were studied in \cite{Cardinal19,devill20}.
Goaoc and Welzl \cite{GoaocW20} studied convex hulls of random order types. 

\section{Preliminaries}
\label{s:pre}

Knuth \cite{k92} introduced and studied {\em CC-systems} (short for "counterclockwise systems") using properties
of order types for up to five points. 
A {\em CC-system} for $n$ points assigns true/false value for every ordered triple of points 
such that they satisfy the following axioms.

{\bf Axiom 1} (cyclic symmetry). $pqr\implies qrp$.

{\bf Axiom 2} (antisymmetry). $pqr \implies \lnot prq$.

{\bf Axiom 3} (nondegeneracy). Either $pqr$ or $prq$.

{\bf Axiom 4} (interiority). $tqr\land ptr\land pqt \implies pqr$.

{\bf Axiom 5} (transitivity). $tsp\land tsq\land tsr\land tpq\land tqr\implies tpr$.

Any set of $n$ points in general position in the plane induces a CC-system 
if we use the "counterclockwise" relation on the points. 
The converse is not true due to the 9-point theorem of Pappus \cite{bok90,k92}. 
When defining a graph for order types using partitions (by the lines extending the edges)
one should be careful. For example, we can ask whether a given set of orientations of some 
triples can be extended somehow to a CC-system. 
If by "extended" we mean finding a CC-system such that the given orientations are preserved in the CC-system, then this problem is NP-complete.
Knuth \cite{k92} proved that it is NP-complete to decide whether specified values 
of fewer than $\binom{n}{3}$ triples can be completed to a CC-system.
We define OT-graphs using the extension of the given orientations by simply applying 
5 axioms. Note that Axioms 1, 2, 4, and 5 imply some orientations.
Axiom 3 also can be formulated as an implication:

{\bf Axiom 3'} (nondegeneracy). $\lnot pqr \implies prq$.

\begin{dfn}
Let $G$ be a graph for a point set $S$ in the plane and let $T$ be the set of triples $abc$ such that $ab, ac$, or $bc$ is an edge of $G$.
Then $G$ is the {\em OT-graph} if the orientation of every triple on $S$ can be derived from $T$ usings Axioms, 1,2,3',4, and 5.
\end{dfn}

\section{Convex Position}
\label{s:conv}

In this section, we explore OT-graphs for point sets in convex position and prove Theorem~\ref{thm:conv}.
Recall that $c_n$ is the minimum number of edges in an OT-graph for $n$ points in convex position.
First, we prove that $c_n\le n$.

\begin{lemma} \label{lem:conv}
Let $S$ be a set of $n$ points in convex position and let $G$ be the graph $(S,E)$ where
$E$ contains the edges of the convex hull of $S$.
Then $G$ is an OT-graph for $S$.
\end{lemma}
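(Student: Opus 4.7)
The plan is to label the convex hull points as $p_1, p_2, \ldots, p_n$ in counterclockwise order, so that every edge of $G$ has the form $(p_i, p_{i+1})$ (indices taken modulo $n$). Since each such line supports the hull, its partition certifies that $p_i p_{i+1} p_k$ is counterclockwise for every $k \notin \{i, i+1\}$, and Axiom 1 gives the same orientation for every cyclic rotation of these triples. The goal is then to show that $p_a p_b p_c$ is derivably counterclockwise whenever $a < b < c$, since together with Axioms 2 and 3' this fixes the orientation of every triple of $S$.

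My main tool is Axiom 5 applied with $t = p_m$ and $s = p_{m+1}$. The three premises $tsp, tsq, tsr$ are then instances of $p_m p_{m+1} p_k$, which are all known directly from the edge $(p_m, p_{m+1})$; so Axiom 5 specializes to the transitivity rule $p_m p_i p_q \land p_m p_q p_r \implies p_m p_i p_r$. Thus having a single edge incident to $p_m$ already gives full transitivity for triples anchored at $p_m$, with $p_{m+1}$ playing the role of a universal witness.

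I would fix $m \in \{1, 2, \ldots, n-2\}$ and prove, by induction on $d = j - i \geq 1$, that $p_m p_i p_j$ is derivable for every $i, j$ with $m < i < j \leq n$. The base case $d = 1$ is immediate from the cyclic rotation of $p_i p_{i+1} p_m$, which is known from edge $(p_i, p_{i+1})$. For the inductive step with $i > m + 1$, I apply the specialized Axiom 5 with $p = p_i$, $q = p_{j-1}$, $r = p_j$: the premise $p_m p_i p_{j-1}$ comes from the inductive hypothesis (since $(j-1) - i < d$), and $p_m p_{j-1} p_j$ is the base case. The remaining case $i = m+1$ is already known directly from the edge $(p_m, p_{m+1})$.

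The main obstacle is simply bookkeeping: verifying that the five elements in each invocation of Axiom 5 are distinct (which is exactly why the case $i = m+1$ must be split off as trivial) and checking the boundary values of $m$ and $d$. Once the induction has been run for every $m \leq n-2$, every counterclockwise-sorted triple $p_a p_b p_c$ has been derived, and Axioms 2 and 3' determine the remaining orientations, so $G$ is an OT-graph. It is worth noting in passing that the derivation uses only the $n-1$ chain edges $(p_1, p_2), \ldots, (p_{n-1}, p_n)$; the wrap-around edge $(p_n, p_1)$ is redundant, which foreshadows that the bound $c_n \leq n$ can be sharpened to the $\lfloor 2n/3 \rfloor$ of Theorem~\ref{thm:conv}.
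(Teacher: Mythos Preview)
Your proof is correct and follows the same strategy as the paper: both invoke Axiom~5 with $t,s$ taken as consecutive hull vertices $p_m,p_{m+1}$, so that the three premises $tsp,tsq,tsr$ are supplied by a single hull edge and the axiom collapses to transitivity for triples anchored at $p_m$. The only cosmetic difference is the induction variable---the paper inducts on the minimum cyclic gap $\min\{b-a,\,c-b,\,a-c+n\}$ while you fix the leftmost index and induct on $j-i$---and your closing observation that the wrap-around edge $(p_n,p_1)$ is never used is a nice bonus not present in the paper.
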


\begin{figure}[htb]
\centering
\includegraphics{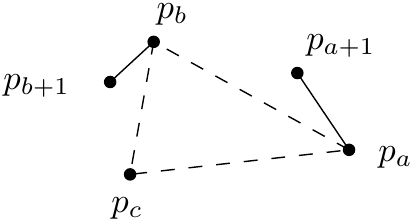}
\caption{Proof of Lemma~\ref{lem:conv}.}
\label{convex-pos1a}
\end{figure}

\begin{proof}
Let $p_0,p_1,\dots,p_{n-1}$ be the points of $S$ in counterclockwise order. 
It suffices to prove that any triple $p_ap_bp_c$ with $0\le a<b<c\le n-1$
has a CCW orientation. 
We prove it by induction on $m=\min\{b-a,c-b,a-c+n\}$.
In the base case, $m=1$. Then $(p_a,p_b), (p_b,p_c),$ or $(p_c,p_a)$ is in $E$. 
Thus, $p_a,p_b,p_c$ has a CCW orientation. 

Suppose that $m>1$ and $m=c-b$. Then $a+1<b$ and $b+1<c$. Edges $(p_a, p_{a+1})$ and $(p_b, p_{b+1})$ imply that triples
$p_a p_{a+1} p_b,  p_a p_{a+1} p_{b+1}, p_a p_{a+1} p_c$, and $p_a p_b p_{b+1}$  
have a CCW orientation. 
By induction hypothesis, triple $p_a p_{b+1} p_c$ has a CCW orientation.
By Axiom 5, $p_a p_b p_c$ has a CCW orientation. 
\end{proof}

\noindent
{\em Proof of Theorem~\ref{thm:conv}}.
Let $p_0,p_1,\dots,p_{n-1}$ be the points of $S$ in counterclockwise order. 
We denote set $\{0,1,\dots,n-1\}$ by $[n]$.

First, suppose that $n=3k$ for some $k\ge 2$.
Consider a graph $G$ with $2k$ edges as shown in Fig.~\ref{convex-pos1}.
We prove that it is an OT-graph.
By Lemma~\ref{lem:conv}, it suffices to show that for any $i,j\in [n]$ with $j\ne i,i+1$ (modulo $n$),
triple $p_ip_{i+1}p_j$ has a CCW orientation\footnote{This condition for a fixed $i$ 
implies that $(p_i,p_{i+1})$ could be an edge in an OT-graph.}.

\begin{figure}[htb]
\centering
\includegraphics{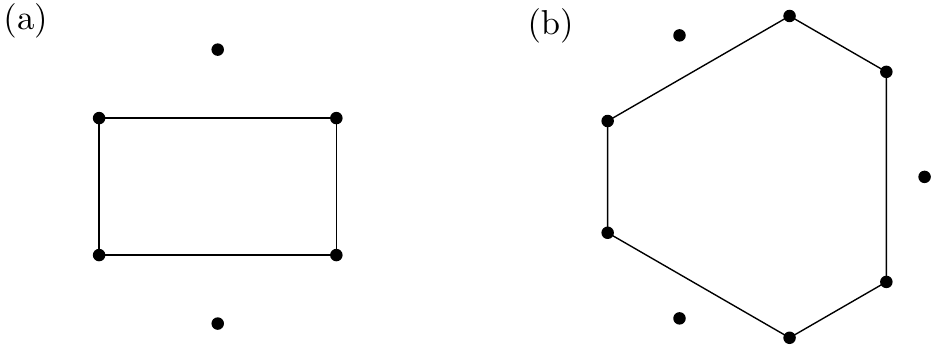}
\caption{OT-graphs for $n$ points in convex position.
(a) $n=6$, (b) $n=9$.}
\label{convex-pos1}
\end{figure}

There are 3 cases to consider, see Fig.~\ref{convex-pos1b}.
Case (a) is clear since $(p_i,p_{i+1})$ is an edge of $G$.
In Case (b), we can assume that $j\ne i+2,i-1$.  
Then it follows by Axiom 5 if we choose $t=p_{i+1}, s=p_{i+2}, p=p_j, q=p_{i-1}$, and $r=p_i$.
In Case (c), we can assume that $j\ne i+2,i-1$.  
Knuth \cite{k92} proved that Axioms 1,2,3, and 5 imply an axiom dual to Axiom 5.

{\bf Axiom 5'} (dual transitivity). $stp\land stq\land str\land tpq\land tqr\implies tpr$.

Then Case (c) follows by Axiom 5' if we choose $t=p_i, s=p_{i-1}, p=p_{i+1}, q=p_{i+2}$, and $r=p_j$.

\begin{figure}[htb]
\centering
\includegraphics{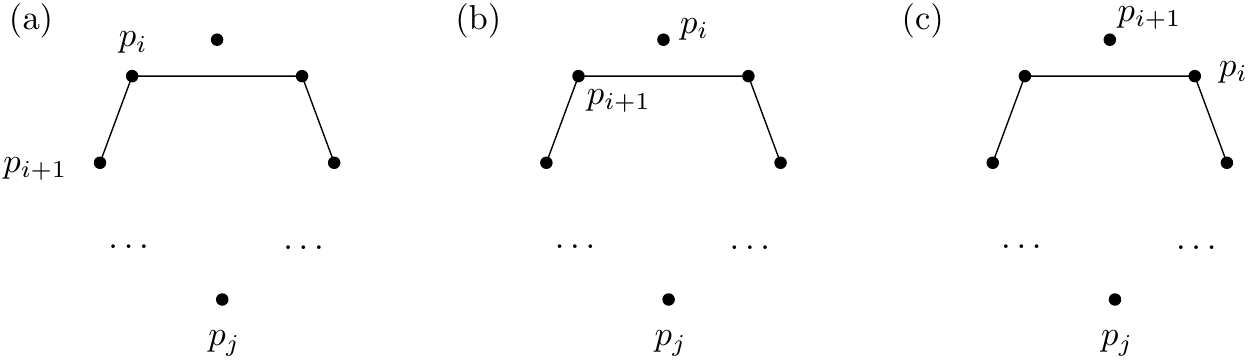}
\caption{Proof of Theorem~\ref{thm:conv} for $n=3k$.}
\label{convex-pos1b}
\end{figure}

Now, suppose that $n=3k+1$ for some $k\ge 2$.
Consider a graph $G$ with $2k$ edges as shown in Fig.~\ref{convex-pos2a}(a).
We prove that it is an OT-graph.
By Lemma~\ref{lem:conv}, it suffices to show that for any $i,j\in [n]$ with $j\ne i,i+1$ (modulo $n$),
tripe $p_ip_{i+1}p_j$ has a CCW orientation. 
If $p_i$ or $p_{i+1}$ is an isolated vertex in $G$ then the argument is the same as in Case (b) and (c)
for $n=3k$, see Fig.~\ref{convex-pos1b}(b) and (c).
If $(p_i,p_{i+1})$ is an edge of $G$ then $p_ip_{i+1}p_j$ has a CCW orientation. 
It remains to consider the case where $(p_i,p_{i+1})$ is one 
of two missing edges in the convex hull at the top, see Fig.~\ref{convex-pos2a}(a).
By symmetry, we assume that $(p_i,p_{i+1})$ is as shown in Fig.~\ref{convex-pos2a}(b).

Suppose that vertex $p_j$ has degree 2 in $G$. 
Let $l$ be the length of path $p_jp_i$ in $G$. 
We show a CCW orientation of $p_ip_{i+1}p_j$ by induction on $l$.
If $l=1$ the orientation follows from edge $p_{i-1}p_i$ of $G$.
If $l>1$ then it follows by Axiom 5 if we choose 
$t=p_{i+1}, s=p_{i+2}, p=p_j, q=p_{j+1}$, and $r=p_i$.
Note that $p_{i+1}p_jq$ has a CCW orientation since $(p_j,q)$ is an edge of $G$.
Also, $p_{i+1}qp_i$ has a CCW orientation by the induction hypothesis.

If vertex $p_j$ is isolated in $G$ then we choose $p,q,r,s,t$ in the same way,
see Fig.~\ref{convex-pos2a}(c). 
Then triple $tpq$ has a CCW orientation from the previous case 
($pq$ is an edge of convex hull). 
And triple $tqr$ has a CCW orientation from the previous case ($q$ has degree 2).
By Axiom 5 tripe $p_ip_{i+1}p_j$ has a CCW orientation. 

\begin{figure}[htb]
\centering
\includegraphics{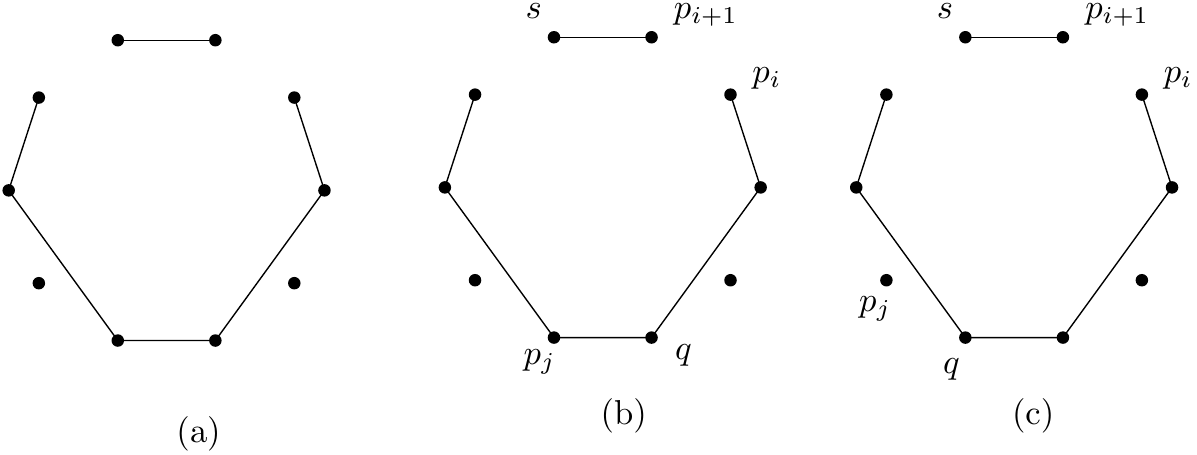}
\caption{Proof of Theorem~\ref{thm:conv} for $n=3k+1$.}
\label{convex-pos2a}
\end{figure}

Finally, suppose that $n=3k+2$ for some $k\ge 2$.
Consider the graph shown in Fig.~\ref{convex-pos3a}.
It is an OT-graph by the same argument as for $n=3k+1$.
\qed

\begin{figure}[htb]
\centering
\includegraphics{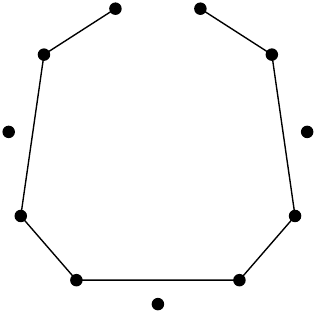}
\caption{OT-graph for $n=3k+2$.}
\label{convex-pos3a}
\end{figure}

{\em Remark}. The OT-graphs presented in the proof of Theorem~\ref{thm:conv}
are not unique. Our program finds also other graphs of the same size, 
see Fig.~\ref{convex-program}.

\begin{figure}[htb]
\centering
\includegraphics{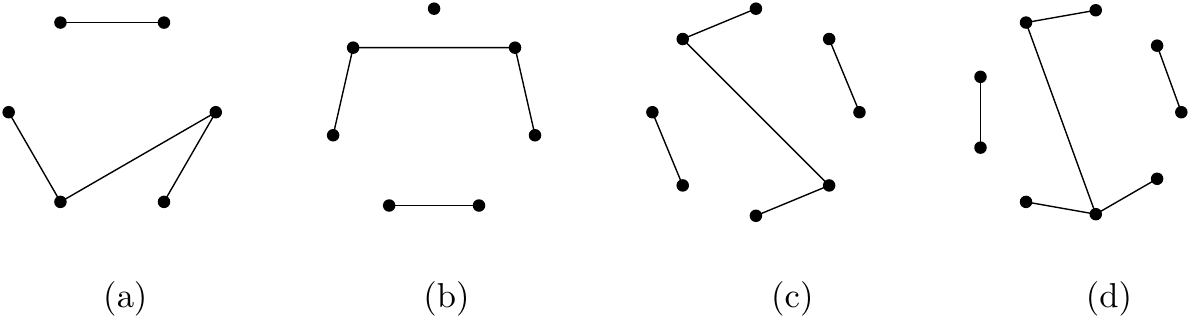}
\caption{OT-graphs for $n=6,7,8,9$ points in convex position computed by a program.}
\label{convex-program}
\end{figure}

\section{Axioms 1,2, and 3 only}
\label{s:123}

Let $e_{123}(S)$ be the minimum number of edges in an OT-graph for a set $S$ of points in general position in the plane if only Axioms 1,2, and 3 are used.
Surprisingly, for any set $S$ of $n$ points (i.e. for any order type), the smallest OT-graph {\em always} contains the same number of edges depending on $n$ only.

\begin{theorem} \label{thm:123}
For any set $S$ of $n\ge 2$ points in general position in the plane, 
$e_{123}(S)=\lfloor \frac{n}{2} \rfloor \lfloor \frac{n - 1}{2} \rfloor$.
\end{theorem}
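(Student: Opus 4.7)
The plan rests on the observation that Axioms 1, 2, and 3 each involve only a single unordered triple of points: they merely relate the ordered permutations of one triple and never produce information about a different triple. Consequently, with only these three axioms available, an OT-graph $G$ on $S$ determines the orientation of $\{a,b,c\}$ if and only if at least one of the pairs $\{a,b\},\{a,c\},\{b,c\}$ is an edge of $G$, i.e., iff $\{a,b,c\}\in T$. Making this reduction precise is the one conceptual step, and I expect it to be the main (and only nontrivial) obstacle in the proof.

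Once this reduction is in place, the question becomes purely combinatorial and independent of the geometry of $S$: what is the minimum number of edges in a graph $G$ on $n$ vertices such that every $3$-element subset of vertices contains at least one edge? This condition is equivalent to requiring that the complement $\bar{G}$ be triangle-free, so by Mantel's theorem $|E(\bar{G})| \le \lfloor n^2/4 \rfloor$, which yields the lower bound
$$e_{123}(S) \ge \binom{n}{2} - \left\lfloor \frac{n^2}{4} \right\rfloor,$$
and a short two-case parity check (on $n$ even vs.\ odd) verifies that the right-hand side equals $\lfloor n/2\rfloor\lfloor (n-1)/2\rfloor$.

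For the matching upper bound, I would partition $S$ arbitrarily into parts of sizes $\lceil n/2\rceil$ and $\lfloor n/2\rfloor$, and take $G$ to be the disjoint union of the two cliques induced on these parts. By the pigeonhole principle every three-vertex subset meets one part in at least two vertices and therefore contains an edge of $G$, so $G$ is a valid OT-graph under Axioms 1, 2, and 3, with exactly $\binom{\lceil n/2\rceil}{2} + \binom{\lfloor n/2\rfloor}{2} = \lfloor n/2\rfloor \lfloor (n-1)/2\rfloor$ edges. The striking fact that $e_{123}(S)$ depends only on $n$ then falls out for free: all geometric content of $S$ is absorbed into the partitions attached to the edges, while the only structural requirement on $G$---that every triple contain an edge---is blind to the order type.
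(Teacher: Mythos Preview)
Your proposal is correct. The reduction you identify---that Axioms 1, 2, and 3 act only within a single unordered triple, so the OT-graph condition under these axioms is precisely that every $3$-subset of $S$ contain an edge of $G$---is exactly the observation the paper uses (implicitly, when it writes ``if $(u,w)$ is not an edge in $G$ for some $u,w\in U$, then the orientation of triple $(v,u,w)$ is not defined by $G$''). The upper-bound construction via two balanced cliques is likewise the same as the paper's.

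Where you diverge is in the lower bound. You pass to the complement, note that ``every triple contains an edge'' is equivalent to ``$\bar G$ is triangle-free,'' and invoke Mantel's theorem to get $|E(\bar G)|\le\lfloor n^2/4\rfloor$ in one stroke. The paper instead argues directly: it picks a vertex $v$ of small degree, shows its non-neighbours form a clique $U$, repeats the trick to find a second clique, and then optimises $\binom{|U|}{2}+\binom{|U'|}{2}$ over the split. This is essentially a hands-on proof of (the relevant case of) Mantel's theorem in the complement. Your route is shorter and makes the connection to classical extremal graph theory transparent; the paper's route is self-contained and avoids citing an external result. Either is fine, and the arithmetic check that $\binom{n}{2}-\lfloor n^2/4\rfloor=\lfloor n/2\rfloor\lfloor (n-1)/2\rfloor$ that you mention is straightforward.
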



\begin{proof}
First, we prove it for all even $n$.
Suppose that $n=2k$ for some integer $k\ge 1$.
Let $S$ be a set of $n$ points in general position in the plane and
let $G=(V,E)$ be an OT-graph for $S$ with the minimum number of edges.
We prove that $|E|\ge k^2-k$.

If every vertex in $G$ has degree at least $k$ then $|E|\ge nk/2=k^2>k^2-k$.
Suppose that there is a vertex $v$ of degree at most $k-1$.
Let $N_v\subset V$ be the set of neighbors of $v$ in $G$ and let $U=V\setminus N_v\cup\{v\}$.
Then $|U|=n-1-|N_v|\ge n-k=k$.
If $(u,w)$ is not an edge in $G$ for some $u,w\in U$, then the orientation 
of triple $(v,u,w)$ is not defined by $G$. 
Thus, $U$ is a clique in $G$. 
Let $U'=V\setminus U$, \ie $U'=N_v\cup \{v\}$.
For a vertex $u\in U$, let $\deg_{U'}(u)=|N_u\cap U'|$.
Let $m$ be the smallest degree $\deg_{U'}(u)$ over all $u\in U$.
Then the number of edges $(u,u')\in E$ such that $u\in U$ and $u'\in U'$,
is at least $m|U|$.
Let $u\in U$ be a vertex with $\deg_{U'}(u)=m$.
Then $W$ be the set $U'\setminus N_u$. 
Clearly, $|W|=|U'|-m$ and $W$ is a clique in $G$.
Then $|E|\ge \binom{|U|}{2}+m|U|+\binom{|W|}{2}$. 
Let $x=|W|$. Then $|E|\ge \binom{|U|}{2}+(|U'|-x)|U|+x^2/2-x/2$.
Function $x^2/2-(|U|+1/2)x$  is decreasing in $[0,|U|]$.
Since $x\le |U'|=n-|U|\le k\le |U|$, the minimum of the lower bound of $|E|$ is achieved when $x=|U'|$.
Then $|E|\ge \binom{|U|}{2}+\binom{|U'|}{2}$.
Then $2|E|\ge |U|^2+|U'|^2-|U|-|U'|=|U|^2+|U'|^2-n$.
The minimum of this lower bound is achieved when $|U|=|U'|=k$, i.e. 
$|E|\ge 2\binom{|U|}{2}=k^2-k$. 
This bound is achieved when $|U|=|U'|=k$.
Then any triple $\{a,b,c\}$ has at least two vertices in $U$ or $U'$. 
This implies the theorem for even $n$.

If $n$ is odd, the proof is similar.
If every vertex in $G$ has degree at least $k$ then $|E|\ge nk/2=(k+1/2)k>k^2$.
Again, we assume that there is a vertex $v$ of degree at most $k-1$.
Then $|U|\ge k+1$. The remaining argument is similar to the even case.
Again, $|E|\ge \binom{|U|}{2}+\binom{|U'|}{2}$ where $|U|\ge k+1$ and $|U'|\le k$.
The minimum is achieved when $|U|=k+1$ and $|U'|=k$ (any triple $\{a,b,c\}$ has at least two vertices in $U$ or $U'$).
Then $|E|\ge \binom{k + 1}{2}+\binom{k}{2}=k^2$. The theorem follows.
\end{proof}

\section{Algorithms}
\label{s:algorithms}

Let $G=(S,E)$ be an OT-graph for a set $S$ of $n$ points in the plane.
Let $T(G)$ be the set of triples $abc$ such that $(a,b),(a,c),$ or $(b,c)$ is an edge of $G$.
Note that the orientation of $abc$ is given by the partition of the corresponding edge.
We define the {\em CC-closure} of $G$ as the set all triples that can be proven by applying Axioms 1-5 from $T(G)$.
Note that the {\em CC-closure} can be defined for any subset of triples of points with orientations.

\begin{prob}[{\sc ComputingCC-Closure}]
  \begin{enumerate}
    \item[] {\ }
      \begin{description}
    \item[Given] an OT-graph $G$.
    \item[Compute] the CC-closure of $G$.
  \end{description}
  \end{enumerate}
\end{prob}

A naive approach to solve {\sc ComputingCC-Closure} is to use an algorithm for 
testing CC-closure.

\begin{prob}[{\sc TestingCC-Closure}]
  \begin{enumerate}
    \item[] {\ }
      \begin{description}
    \item[Given] a set of triples with orientations for $n$ points.
    \item[Decide] whether a new triple can be derived using Axioms 1-5.
	If so, find a new triple using Axioms 1-5.
  \end{description}
  \end{enumerate}
\end{prob}

By applying an algorithm for {\sc TestingCC-Closure} to $T(G)$ we can extend $T(G)$ (if possible)
and solve {\sc ComputingCC-Closure}. 
{\sc TestingCC-Closure} can be done in $O(n^5)$ time (since Axiom 5 requires 5 points). 
There are $\binom{n}{3}$ triples and, thus, the naive approach takes
$O(n^8)$ time. 
We show that it can be done much faster.

\begin{theorem}
{\sc ComputingCC-Closure} can be solved in $O(n^5)$ time.
\end{theorem}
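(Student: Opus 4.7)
The plan is to replace the naive ``restart from scratch'' strategy with a semi-naive worklist fixpoint computation, in the style of Datalog evaluation. Maintain an $n\times n\times n$ lookup table storing, for every ordered triple $(a,b,c)$, whether it is known CCW, known CW, or still undetermined, together with a queue $Q$ of triples waiting to be processed. Initialize the table from $T(G)$ (using Axioms 1, 2, and $3'$ to populate all six orderings of each known triple consistently) and enqueue each newly recorded triple. Then repeatedly pop a triple $\tau$ and, for every axiom in which $\tau$ can appear as a hypothesis, search for completions whose other hypotheses are already known; enqueue every newly derived conclusion.

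The cost of processing one popped triple is analyzed axiom by axiom. Axioms 1, 2, and $3'$ involve only $\tau$ itself and cost $O(1)$. For Axiom 4, fixing $\tau$ as one of the three hypotheses pins three of the four points, leaving $O(n)$ choices for the fourth point, each verified with $O(1)$ table lookups. The bottleneck is Axiom 5, namely $tsp\wedge tsq\wedge tsr\wedge tpq\wedge tqr \Rightarrow tpr$: each hypothesis is a triple over three of the five points $t,s,p,q,r$, so placing $\tau$ in any of the five hypothesis slots fixes three of these points and leaves $O(n^2)$ completions for the remaining two. Each completion is checked in $O(1)$ time (four lookups and a possible insertion), so one pop costs $O(n^2)$ overall.

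Each unordered triple can transition from undetermined to known at most once, and is enqueued at most once on that transition (guarded by a table check before pushing), so $Q$ processes $O(n^3)$ items in total, yielding the claimed $O(n^5)$ running time. Correctness is by induction on the shortest derivation of a CC-closure triple $\gamma$: the derivation's final axiom application has all of its hypotheses in the CC-closure with strictly shorter derivations, hence all enqueued earlier by the induction hypothesis; when the last of those hypotheses is popped, the algorithm enumerates exactly that application (placing the popped hypothesis in its slot and iterating over the $O(n^{k-3})$ completions for a $k$-point axiom) and inserts $\gamma$.

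The main obstacle is precisely this ``drive the axiom from any one of its hypotheses'' accounting: one must be sure that when Axiom 5 becomes firing-ready, some trigger exists that both (i) is reached in time and (ii) enumerates only $O(n^2)$ and not $O(n^5)$ extensions. Both hold because every hypothesis of every axiom is itself a triple and therefore already fixes three of the axiom's points, leaving only $k-3\le 2$ free points to enumerate, while the ``last enqueued hypothesis'' provides a well-defined trigger moment at which the application is discovered. With that, summing $O(n^2)$ work over $O(n^3)$ pops completes the proof.
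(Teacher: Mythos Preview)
Your proposal is correct and is essentially the same as the paper's Algorithm~1: a worklist $L_2$ of newly known triples, an $O(1)$ lookup array $A_1$, and per-pop work bounded by $O(1)$ for Axioms 1, 2, $3'$, $O(n)$ for Axiom~4, and $O(n^2)$ for Axiom~5, summed over $O(n^3)$ pops. Your writeup adds an explicit correctness induction (triggering on the last-popped hypothesis) that the paper leaves implicit, but algorithm and analysis coincide.
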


\textbf{Algorithm 1}. 

\begin{enumerate}
\item 
Make a list $L_1$ of all input triples with orientations (list $L_1$ stores all triples with known orientations). 
Copy $L_2=L_1$. 
\item 
While list $L_2$ is not empty, remove any triple $abc$ from list $L_2$. 
Apply Axioms as follows. 
Find new triples using Axioms 1,2,3',4, and 5 such that triple $abc$ is used in the condition 
of the axiom with the same orientation.
If a new triple (i.e. not in $L_1$) is found, say $pqr$, then add it to $L_1$ and $L_2$.
\item Return list $L_1$.
\end{enumerate}

\begin{proof}
To implement Algorithm 1 efficiently, we store triples of $L_1$ in 
a 3-dimensional array $A_1$. The value of $A_1[a,b,c]$ is {\em true}/{\em false}
if $abc$ has a CCW/CW orientation; otherwise $A_1[a,b,c]=${\em null}.
Using array $A_1$, we can decide in $O(1)$ time whether a triple is in list $L_1$ or not.
Each triple $abc$ is processed in Step 2 in $O(n^2)$ time since \\
(i) Axioms 1,2, and 3' can be applied at most one time,\\
(ii) Axiom 4 can be applied at most $n-3$ times and\\
(iii) Axiom 5 can be applied at most $(n-3)(n-4)$ times. \\
Each triple is added to (and removed from) list $L_2$ at most one time. 
The number of triples removed from $L_2$ in Step 2 is $O(n^3)$.
Therefore, the total time complexity of the algorithm is $O(n^5)$.

In our implementation of Algorithm 1, we do not maintain list $L_1$. Instead, we compute it in the end using array $A_1$.
\end{proof}

The problem of computing the smallest OT-graph for a given order type
seems complicated. Note that, if we restrict the axioms to Axioms 1,2, and 3
then a simple polynomial-time algorithm for computing the smallest OT-graph exists by Theorem~\ref{thm:123} 
(by constructing two cliques).
Next, we extend Algorithm 1 to a randomized algorithm for computing an OT-graph without 
increasing the running time. 
We incrementally add edges to a graph $G=(S,E)$ until $G$ is an OT-graph for $S$. 
We store $L_1$, a list of triples $abc$ such that $(a,b),(a,c),$ or $(b,c)$ is an edge of $G$. 
Note that the orientation of $abc$ can be computed using the coordinates of $a, b,$ and $c$ in $O(1)$ time. 
As in Algorithm 1, we have list $L_2$ which is useful for computing the CC-closure of $G$.

 \vspace{2mm}
\textbf{Algorithm 2}. 

Input: an order type given by a set of points $S$.

Output: an OT-graph $G$ for $S$
\begin{enumerate}
\item Set $E=\emptyset$. 
Set {\em countCC=0}, the number of triples in the CC-closure of $G=(S,E)$.
\item Compute list $R$ of $\binom{n}{2}$ edges in the complete graph for $S$.
\item Initialize array $A_1[n,n,n]$ with entry values {\em null} and empty list $L_2$.  
\item While {\em countCC}$<n(n-1)(n-2)$ 
\begin{enumerate}
  \item Remove a random edge $(a,b)$ from $R$.
  \item If $A_1[a,b,c]\ne${\em null} for all $c\in S\setminus\{a,b\}$ then continue the "while" loop
  otherwise do the following steps (c) and (d).
  \item Add $(a,b)$ to list $E$. For each $c\in S\setminus\{a,b\}$ such that $A_a[a,b,c]\ne${\em null},
  add one of the triples $(a,b,c)$ or $(b,a,c)$ to list $L_2$ which has a CCW orientation.
\item Process list $L_2$ as in Algorithm 1.
\end{enumerate}
\end{enumerate}

Algorithm 2 (if repeated several times) can find the smallest OT-graph for a given order type,
see for example Fig.~\ref{convex-program}. 
We also make a program that helps to verify the proof of an OT-graph.
Note that a triple can be proven differently using Axioms 1-5.
We develop a program for finding a human-readable proof.
Once the best OT-graph for a given order type is found, the program computes a proof only for triples that 
require Axioms 4 and 5 (Axioms 1-3 are obvious). 
For example, Fig.~\ref{fig:proof} illustrates an OT-graph among all order type of 9 points and the format of the proof. 

\begin{figure}[h]
\centering
\includegraphics{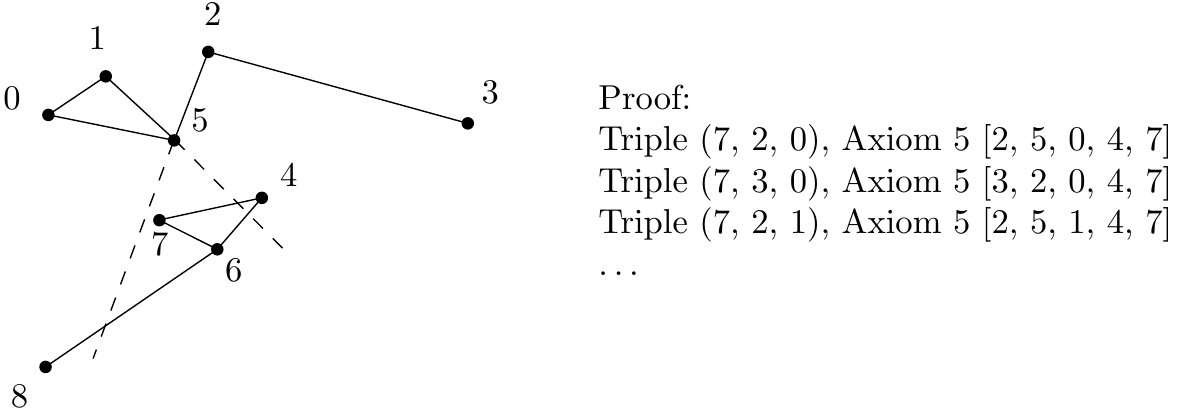}
\caption{An OT-graph for order type of 9 points and a part of the proof of it. 
The format for Axiom 4 is [$p,q,r,t$]
and the format for Axiom 5 is [$p,q,r,s,t$].}
\label{fig:proof}
\end{figure}

{\em Greedy algorithm}.
Each iteration Algorithm 2 is quite fast (for relatively small $n$).
However, it may require many runs to find small OT-graphs.
Another possibility is a greedy algorithm where all possible edges for adding to the current graph are tested and the edge maximizing the size of the CC-closure is selected. 
Since the computation of the CC-closure takes $O(n^5)$ time, this approach is computationally expensive (it takes $O(n^7)$ time for selecting one edge and $O(n^9)$ for constructing the OT-graph).
We developed a different greedy algorithm where the edge maximizing the size of the CC-closure using only Axioms 1,2,3 is selected.
We found an implementation of this algorithm without increasing the running time, i.e. with running time $O(n^5)$.
We add a new 2-dimensional array $C[..]$ for counting triples corresponding to the edges.
Initially, $C[a,b]=n-2$ for all pairs $(a,b)$ of points $a\ne b$.
Every time a new triple, say $abc$, is proven using Axioms we subtract one from $C[x,y]$ for all possible  $x\ne y\in \{a,b,c\}$. 
Then, the greedy selection can be done by finding an edge $(a,b)$ maximizing $C[a,b]$.

The total running time of this algorithm has two components.
It is $O(n^5)$ time as in the randomized algorithm plus the total time for processing new array $C[..]$.
There are $O(n^3)$ new triples and each triple requires $O(1)$ to update $C[..]$. 
This step takes $O(n^3)$ time in total.
The computation of a new edge for $G$ takes $O(n^2)$ time. 
Thus, the total time for computing the edges of $G=(S,E)$ is $O(mn^2)$ where $m=|E|$.
Therefore, the total time for processing array $C[..]$ is $O(n^4)$. 
{\em Minimal OT-graphs}.
When an OT-graph with $m$ edges is computed, it can be checked for minimality.
An OT-graph for some order type is {\em minimal} if removal of any edge results in a graph 
which not an OT-graph, i.e. its CC-closure does not contain all possible triples.
This can be decided by applying the algorithm for {\sc ComputingCC-Closure} $m$ times.

\section{Experiments}
\label{s:experiments}

We implemented the randomized algorithm (Algorithm 2) and the greedy algorithm for computing OT-graphs. 
The programs are written in Java 8 using multi-threading and thread synchronization. 
We used a Linux server with 32 CPUs and 62 GB RAM to execute our program.
We have computed the exit graphs and the OT-graphs on the database of order types \cite{aak02} for $n=3,4,\dots,9$.
To achieve current database and ensure the minimality of edges of OT-graphs, we run it around more than 3 days on the dataset.
The results are shown in Table~\ref{tab1}.
Our experiments show that in many cases the greedy algorithm outperforms Algorithm 2 by the size of an OT-graph. 
Therefore, we iterate the greedy algorithm (with random tie-breaking) first and then iterate Algorithm 2 searching for a possible improvement. 
The number of iterations used for the greedy algorithm was 300,1200,10000 for $n=7,8,9$ respectively. 
The number of iterations used for Algorithm 2 significantly larger (100000 for $n=9$).
About 70\% of OT-graphs in Table~\ref{tab1} were computed using the greedy algorithm. 
The improvement achieved by Algorithm 2 was rather small:
typically one edge reduction for an order type. 
The program implementing Algorithm 2 is still running and hopefully, new OT-graphs will be computed in a few months.

\begin{table}
\caption{OT-graphs for $n$ up to 9. Column $i, i=1,2,\dots,11$ contains the number of OT-graphs with $i$ edges.}
\label{tab1}
\begin{center}
{\setlength{\tabcolsep}{1.6mm}
 \begin{tabular}{|c | r r r r r r r r r r r | r |} 
 \hline
 $n$ & 1 & 2 & 3 & 4 & 5 & 6 & 7 & 8 & 9 & 10 & 11 & total \\ [0.5ex] 
 \hline\hline
 3 & 1 & & & & & & & & & & & 1 \\
 \hline
 4 & & 2 & & & & & & & & & & 2 \\
 \hline
 5 & & & 3 & & & & & & & & & 3 \\
 \hline
 6 & & & & 14 & 2 & & & & & & & 16 \\
 \hline
 7 & & & & 2 & 79 & 54 & & & & & & 135 \\
 \hline
 8 & & & & & 26 & 696 & 1,802 & 791 & & & & 3,315 \\
  \hline
 9 & & & & & 1 & 234 & 9,379 & 49,331 & 73,906 & 25,671 & 295 & 158,817 \\
 \hline
\end{tabular}
}
\end{center}
\end{table}

\begin{figure}[h]
\centering
\includegraphics{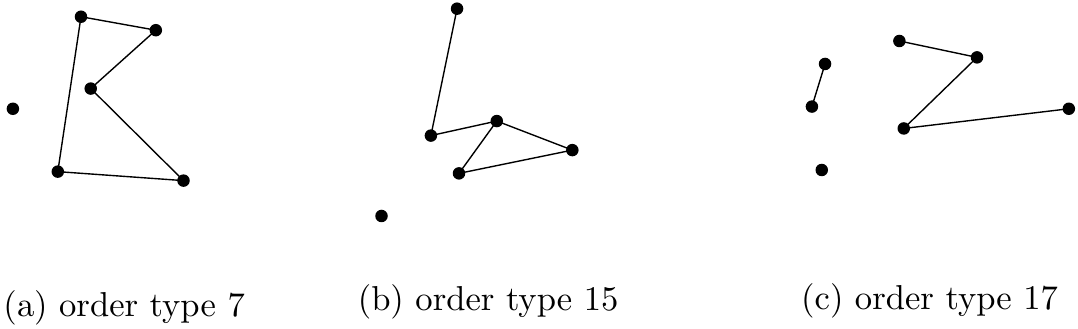}
\caption{Extreme OT-graphs for $n=6$ and $n=7$. 
(a),(b) Two order types for $n=6$ maximizing the number of edges. (c) An order type for $n=7$ (different from the convex case) minimizing the number of edges.}
\label{fig:6}
\end{figure}

It is interesting that the smallest OT-graphs are achieved for 14 order types with $n=6$, 
for 2 order types with $n=7$, for 26 order types with $n=8$, and for 124 order types with $n=9$.  
There are only 2 order types for $n=6$ whose OT-graphs require 5 edges.
They are shown in Fig.~\ref{fig:6}(a) and (b).
The two order types for $n=7$ that admit OT-graph with 4 edges are shown in Fig.~\ref{convex-program}(b)
(the convex position) and in Fig.~\ref{fig:6}(c).

Let $\mu(n)$ be the minimum number of edges in an OT-graph for $n$ points. 
Based on our experiments, 
we conjecture that $\mu(4)=2$,
$\mu(5)=3, \mu(6)=\mu(7)=4, \mu(8)=\mu(9)=5$. 
This can be compared with exit graphs where the minimum number of edges is the same for $n=5,6,7,8$ but is larger for $n=9$, see Fig.~\ref{fig:comp}.

Figure~\ref{fig:comp}(a),(c) shows the distribution of the graph sizes (OT-graph vs exit graph). 
Figure~\ref{fig:comp}(b),(d) shows comparison of the graph sizes for each order type (the order types are sorted by the size of OT-graph and exit graph). 
Except one order type for $n=8$ and 17 order types for $n=9$, the OT-graphs are smaller that the exit graphs. 
For $n=9$, the maximum size of OT-graph/exit graph is 11/16, respectively.
The corresponding total number of edges is 
1386819 for OT graphs and 1673757 for exit graphs which is 82.85\%.

\begin{figure}[htb]
\subfloat[]{\includegraphics[scale=0.46]{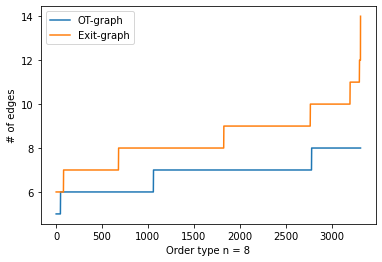}}
\subfloat[]{\includegraphics[scale=0.46]{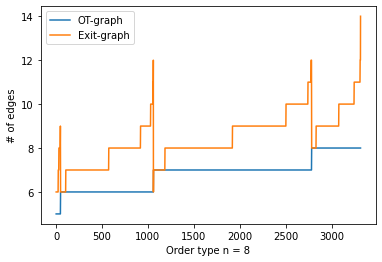}}
\\
\subfloat[]{\includegraphics[scale=0.46]{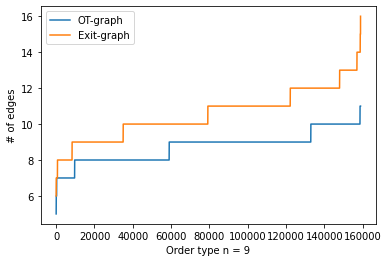}}
\subfloat[]{\includegraphics[scale=0.46]{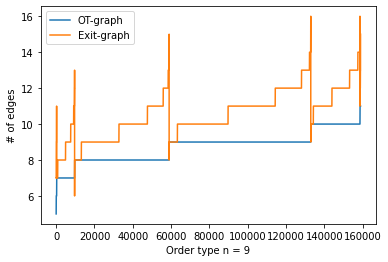}}
\caption{The sizes of OT-graphs and exit graphs of order types for (a,b) $n=8$ and (c,d) $n=9$. The order types in (a) and (c) are sorted independently for OT-graphs and exit graphs. The functions  in (b) and (d) use the same order type on the $x$-axis (the order types are sorted lexicographically).}
\label{fig:comp}
\end{figure}

\section{Concluding Remarks}
\label{s:concl}

In this paper, we introduced OT-graphs for visualizing order types in the plane.
This new concept gives rise to many interesting questions.
Is it true that the smallest size OT-graphs for all order types of $n$ points 
are achieved for points in convex position? 
Is the bound in Theorem~\ref{thm:conv} tight? 

In many cases there are different OT-graphs of minimum size for the same order type.
One can use other criteria to optimize OT-graphs, for example, crossings.
Fig.~\ref{fig:n=4} shows that there exist OT-graphs without crossings for all order types of 4 and 5 points.
Theorem~\ref{thm:conv} shows that there are OT-graphs without crossings for points in convex position. 
Can it be generalized in this sense?


\bibliographystyle{plain}
\bibliography{main}

\end{document}